\newcommand{\leqnomode}{\tagsleft@true\let\veqno\@@leqno}
\newcommand{\reqnomode}{\tagsleft@false\let\veqno\@@eqno}
\mathchardef\mhyphen="2D
\newcommand{\R}{\ensuremath{\mathbb R}}
\newcommand{\Rp}{\ensuremath{\mathbb R_{\geq 0}}}
\newcommand{\Zp}{\ensuremath{\mathbb Z_{\geq 0}}}
\newcommand{\twoecm}{\ensuremath{\mathsf{2ECM}}}
\newcommand{\twoecmip}{\ensuremath{\twoecm\mhyphen\mathsf{IP}}}
\newcommand{\twoecmlp}{\ensuremath{\twoecm\mhyphen\mathsf{LP}}}
\newcommand{\tsplp}{\ensuremath{\mathsf{Subtour\mhyphen LP}}}
\newcommand{\tsp}{\ensuremath{\mathsf{TSP}}}
\newcommand{\twoecs}{\ensuremath{\mathsf{2ECS}}}
\newcommand{\twoecmig}{\ensuremath{\alpha\mathsf{2ECM}}}
\newcommand{\adm}{\ensuremath{\mathsf{A_{e}}}}
\newcommand{\hG}{\hat{G}}
\newcommand{\hH}{\hat{H}}
\newcommand{\bG}{\overline{G}}
\newcommand{\bV}{\overline{V}}
\newcommand{\bE}{\overline{E}}
\newcommand{\bH}{\overline{H}}
\newcommand{\cc}{\mu}
\newcommand{\ec}{\lambda}
\newcommand{\ccvec}{q}
\newtheorem{theorem}{Theorem}
\newtheorem{lemma}[theorem]{Lemma}
\newtheorem{definition}[theorem]{Definition}
\theoremstyle{remark} }
\newenvironment{proofof}[1]{\begin{proof}[Proof of #1]}{\end{proof}}
\begin{document}

\title{A $4/3$-Approximation Algorithm for the Minimum $2$-Edge Connected
Multisubgraph Problem in the Half-Integral Case\footnote{A preliminary version of this paper will appear in the Proceedings of APPROX 2020.} }

\author{
    Sylvia Boyd \thanks{{\tt sboyd@uottawa.ca}.
    School of Electrical Engineering and Computer Science, University of Ottawa, Ottawa, Canada} 
    \and
    Joseph Cheriyan \thanks{{\tt \{jcheriyan,sharat.ibrahimpur\}@uwaterloo.ca}.
    Dept. of Combinatorics and Optimization, University of Waterloo, Waterloo, Canada} 
    \and \addtocounter{footnote}{-1} 
    Robert Cummings \footnotemark 
    \and \addtocounter{footnote}{-1}
    Logan Grout \footnotemark
    \and \addtocounter{footnote}{-1}
    Sharat Ibrahimpur \footnotemark 
    \and
    Zolt\'an Szigeti \thanks{{\tt zoltan.szigeti@grenoble-inp.fr}. University Grenoble Alpes, CNRS, G-SCOP, Grenoble, France}
    \and \addtocounter{footnote}{-2}
    Lu Wang \footnotemark
}

\maketitle

\begin{abstract}

Given a connected undirected graph $\bG$ on $n$ vertices, and non-negative edge costs $c$, the $\twoecm$ problem is that of finding a $2$-edge~connected spanning multisubgraph of $\bG$ of minimum cost. 
The natural linear program (LP) for $\twoecm$, which coincides with the subtour LP for the Traveling Salesman Problem on the metric closure of $\bG$, gives a lower bound on the optimal cost.
For instances where this LP is optimized by a half-integral solution $x$, Carr and Ravi (1998) showed that the integrality gap is at most $\frac43$: they show that the vector $\frac43 x$ dominates a convex combination of incidence vectors of $2$-edge connected spanning multisubgraphs of $\bG$.

We present a simpler proof of the result due to Carr and Ravi by
applying an extension of Lov\'{a}sz's splitting-off theorem.
Our proof naturally leads to a $\frac43$-approximation algorithm for half-integral instances. 
Given a half-integral solution $x$ to the LP for $\twoecm$, we give
an $O(n^2)$-time algorithm to obtain a $2$-edge connected spanning
multisubgraph of $\bG$ whose cost is at most $\frac43 c^T x$.

\end{abstract}

\section{Introduction} \label{intro}

The $2$-edge connected multisubgraph ($\twoecm$) problem is a
fundamental problem in survivable network design where one wants
to be resilient against a single edge failure.  In this problem,
we are given an undirected graph $\bG = (\bV,\bE)$ with non-negative edge costs
$c$ and we want to find a $2$-edge connected spanning multisubgraph
of $\bG$ of minimum cost. Below we give an integer linear program
for $\twoecm$. The variable $x_e$ denotes the number of copies of edge
$e$ that are used in a feasible solution. 
For any $S \subset V$, $\delta(S) := \{ e = uv \in \bE: u \in S, v \notin S \}$ denotes the cut induced by $S$.
For any $F \subseteq E$ and vector $x\in \R^E$, we use $x(F)$ as a shorthand for $\sum_{e \in F} x_e$.  Also, for any graph $H$ with edge costs $c$, we sometimes use $c(H)$ as a shorthand for $c(E(H))$.  

\begin{minipage}{0.15\textwidth}
\leqnomode
\begin{equation}
\tag{\twoecmip} \label{eq:2ecip}
\end{equation}
\reqnomode
\end{minipage}
\
\begin{minipage}{0.8\textwidth}
\begin{alignat}{3}
\min \qquad && \sum_{e \in \bE} c_e x_e \qquad && \label{eq:obj} \\
\text{subject to} \qquad && x(\delta(S)) \geq 2 \qquad && \forall \, \emptyset \subsetneq S \subsetneq \bV, \label{eq:subtour} \\
&& x_e \geq 0 \qquad && \forall e \in \bE, \label{eq:nonneg} \\
&& x_e \text{ integral} \qquad && \forall e \in \bE . \label{eq:2ecint}
\end{alignat}
\end{minipage}

\medskip

It is easy to see that an optimal solution for $\twoecm$ never has
to use more than two copies of an edge.
As is discussed in \cite{CarrR98}, since we are allowed to use more than one copy of an edge, without loss of generality, we may assume that $\bG$ is complete by performing the metric completion: for each $u,v \in \bV$ we set the new cost of the edge $uv$ to be the shortest path distance between $u$ and $v$ in $\bG$.
In the sequel, we assume that $\bG$ is a complete graph and that the cost function $c$ is metric i.e., $c \geq 0$ and for every $u,v,w \in \bV$, we have $c_{uw} \leq c_{uv} + c_{vw}$.

The linear relaxation $(\twoecmlp)$ for $\twoecm$ is obtained by dropping the integrality constraints given by \eqref{eq:2ecint}.
By a result due to Goemans and Bertsimas \cite{GoemansB93} called the \emph{parsimonious property}, adding the constraint $x(\delta(v)) = 2$ for each $v \in \bV$ to $(\twoecmlp)$ does not increase the optimal solution value; here, we require the assumption that the costs form a metric.
So, the optimal value of $(\twoecmlp)$ is the same as the optimal value for the well-known subtour elimination LP $(\tsplp)$ for the Traveling Salesman Problem ($\tsp$) defined below.
Due to this connection, we often refer to an optimal solution for $(\twoecmlp)$ as an optimal solution to (${\tsplp}$), and vice versa.
Another consequence of the parsimonious property is that for graphs with at least $3$ vertices, the constraint $x_e \leq 1$ is implied by ($\tsplp$): for any $e = uv$, we have ${2 x_e = x(\delta(u))+x(\delta(v))-x(\delta(\{u,v\})) \leq 2}$.

\begin{minipage}{0.15\textwidth}
\leqnomode
\begin{equation}
\tag{\tsplp} \label{eq:tsplp}
\end{equation}
\reqnomode
\end{minipage}
\
\begin{minipage}{0.8\textwidth}
\begin{alignat}{3}
\min \qquad && \sum_{e \in \bE} c_e x_e \qquad && \label{eq:objtsp} \\
\text{subject to} \qquad && x(\delta(S)) \geq 2 \qquad && \forall \, \emptyset \subsetneq S \subsetneq \bV, \label{eq:subtourtsp} \\
\qquad && x(\delta(v)) = 2 \qquad && \forall v \in \bV, \label{eq:degtsp} \\
&& x_e \geq 0 \qquad && \forall e \in \bE. \label{eq:nonnegtsp}
\end{alignat}
\end{minipage}

\medskip

A long-standing open problem called the ``four-thirds conjecture'' states that the integrality gap of $(\tsplp)$ is $\frac43$.
Besides the importance of $\twoecm$ in the field of survivable network design, the connection between $(\twoecmlp)$ and $(\tsplp)$ has spurred interest in determining the integrality gap for $(\twoecmlp)$ as a means to gaining useful lower bounds on the integrality gap for $(\tsplp)$.
The general version of metric $\tsp$ has resisted all attempts at proving an upper bound better than $\frac32$ on the integrality gap, so a great deal of research has focused on obtaining improvements for important special cases.
In \cite{SchalekampWZ14}, the authors conjecture that the integrality gap for $(\tsplp)$ is achieved on instances where an optimal (fractional) solution to $(\tsplp)$
is half integral i.e., $2 x_e \in \Zp$ for all $e$.
We refer to such instances as \emph{half integral instances}.
More than two decades ago, Carr and Ravi \cite{CarrR98} proved that the integrality gap of $(\twoecmlp)$ is at most $\frac43$ in the half-integral case.
They show that $\frac43 x$ dominates a convex combination of $2$-edge connected spanning multisubgraphs of $\bG$.
This supports the four-thirds conjecture for $\tsp$ since the (integer) optimal value for $\twoecm$ lower bounds the (integer) optimal value for $\tsp$.
However, the proof of Carr and Ravi does not give a polynomial-time algorithm for $\twoecm$.
Very recently, in \cite{KarlinKG20}, Karlin, Klein, and Oveis Gharan gave a randomized approximation algorithm for half-integral instances of $\tsp$ whose (expected) approximation factor is $\frac32-0.00007$.
This immediately implies a better than $\frac32$-approximation algorithm,
albeit randomized, for $\twoecm$ as well.

We note that the result of Carr and Ravi mentioned above does not apply to
the strict variant of $\twoecm$  (henceforth denoted by $\twoecs$) where we are allowed
to pick at most one copy of an edge in $\bG$, i.e. where we are considering subgraphs of $\bG$ rather than multisubgraphs; similarly, our main result
does not apply to $\twoecs$.

\subsection{Our Work} \label{ourwork}

Our main contribution is a deterministic approximation algorithm
for $\twoecm$ on half-integral instances that matches the existence
result in \cite{CarrR98}.

\begin{theorem} \label{thm:fourthirdsalgo}
Let $x$ denote an optimal half-integral solution to an instance $(\bG,c)$ of $(\tsplp)$ (and $(\twoecmlp)$). 
There is an $O(|V(\bG)|^2)$-time algorithm for computing a $2$-edge connected spanning multisubgraph of $\bG$ with cost at most $\frac43 c^T x$.
\end{theorem}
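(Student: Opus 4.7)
The plan is to proceed by induction on $n := |V(\bG)|$, working with the multigraph $H := 2x$. Since $x$ is a half-integral LP solution with $x(\delta(v)) = 2$ and $x(\delta(S)) \geq 2$ for every nonempty proper $S \subsetneq V$, the multigraph $H$ is $4$-regular, $4$-edge-connected, and Eulerian. This is exactly the setting in which Lov\'asz's splitting-off theorem (and its extensions) applies.

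For the inductive step, pick any vertex $s$ of $H$. Using the extension of Lov\'asz's theorem referenced in the introduction, produce a \emph{complete admissible splitting} at $s$: a pairing of the four edges $sv_1,sv_2,sv_3,sv_4$ into $\{sv_1,sv_2\}$ and $\{sv_3,sv_4\}$ such that replacing each pair by the corresponding single edge ($v_1v_2$, resp.\ $v_3v_4$) yields a $4$-edge-connected Eulerian multigraph $H' = 2x'$ on $V \setminus \{s\}$. Then $x'$ is a feasible half-integral LP solution on $n-1$ vertices, and the triangle inequality $c(v_iv_j) \leq c(sv_i) + c(sv_j)$ implies $c^T x' \leq c^T x$. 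Recursively obtain a 2ECM $M'$ of the reduced graph with $c(M') \leq \tfrac{4}{3}\, c^T x'$. Lift $M'$ back to a 2ECM $M$ of $\bG$ by replacing each occurrence in $M'$ of a new edge $v_iv_j$ (coming from splitting off $\{sv_i,sv_j\}$) by the two original edges $sv_i$ and $sv_j$, inserting a suitable pair of $s$-edges by hand if neither new edge happens to be used. The base case $n = 3$ is immediate, since $H$ is a multi-triangle and its underlying simple triangle is a 2ECM of cost $c^T x \leq \tfrac{4}{3}\, c^T x$.

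The principal obstacle is the cost analysis of the lifting step. Each lift of a new edge $v_iv_j$ to the path $v_i\, s\, v_j$ incurs overhead $c(sv_i) + c(sv_j) - c(v_iv_j) \geq 0$, and if both new edges at $s$ happen to be used by $M'$, the accumulated overhead can be as large as $c(H) - c(H') = 2(c^T x - c^T x')$, which exceeds the $\tfrac{4}{3}(c^T x - c^T x')$ slack that the induction hypothesis provides. Closing this gap is precisely where the \emph{extension} (rather than the classical form) of Lov\'asz's theorem is needed: it permits us to impose an additional structural constraint on the admissible splitting — for instance a parity condition on how the new edges interact with the support of $x$, or a restriction forbidding the ``bad'' pairing — that guarantees the lifted 2ECM uses at most one of the two new edges at $s$, bounding the per-step overhead by a fraction strictly less than the capacity freed by removing $s$. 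Once this is engineered, telescoping the inductive bound gives $c(M) \leq \tfrac{4}{3}\, c^T x$. The $O(n^2)$ running time then follows because an admissible splitting can be located by testing the three pairings of $s$'s four incident edges against the $4$-edge-connectivity condition using a constant number of min-cut queries on an $O(n)$-vertex multigraph, and there are $O(n)$ recursive calls.
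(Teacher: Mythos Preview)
Your overall plan---pass to the $4$-regular $4$-edge-connected multigraph $2x$, split off at a vertex, recurse, and lift---matches the paper's architecture. But the cost analysis you flag as the ``principal obstacle'' is precisely where your proposal breaks down, and your suggested fix does not work.

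You propose to use the extension of Lov\'asz's theorem to force the recursive solution $M'$ to use at most one of the two new edges created at $s$. The Bang--Jensen et~al.\ result does not do this: it only says that a fixed edge $e = sv$ has at least $\tfrac12\deg(s)$ admissible partners. It gives you freedom in \emph{choosing} the splitting, not control over which edges the \emph{recursive output} picks. Even granting your wish, the arithmetic still fails: if $M'$ uses only the new edge $v_1v_2$, the lift overhead is $c(sv_1)+c(sv_2)-c(v_1v_2)$, while the slack $\tfrac43(c^Tx-c^Tx')$ equals $\tfrac23\sum_i c(sv_i)-\tfrac23[c(v_1v_2)+c(v_3v_4)]$; when the triangle inequality is tight on the unused pair ($c(v_3v_4)=c(sv_3)+c(sv_4)$), the slack can be strictly smaller than the overhead.

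The paper closes this gap with three ingredients you are missing. First, it carries a \emph{designated edge} $e=uv$ through the recursion and always returns a $2$-edge-connected subgraph of $G-e$; after splitting, the new edge $ux$ becomes the designated edge in the subinstance and is therefore automatically excluded---this is what guarantees only the other new edge $yz$ can appear. Second, Bang--Jensen et~al.\ is used to get \emph{two} admissible partners for $e$, and the algorithm chooses the one with larger cost to pair with $e$; this asymmetric choice is what buys the inequality $c_{vy}-c_{vx}\le 0$ needed at the end. Third---and this is the decisive trick---the cost assigned to the surviving new edge $yz$ is $c_{vz}-c_{vx}$, which may be \emph{negative}; the recursive statement (Theorem~\ref{thm:twothirds}) is formulated for arbitrary real costs precisely so this is legal. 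With this choice the lifted cost is identical whether or not $yz\in\hH$, and a two-line computation gives $c(H)\le\tfrac23 c(G-e)$. Your metric/triangle-inequality framing cannot accommodate negative costs, so the mechanism that balances the two lift outcomes is unavailable to you.
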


We can strengthen the above result by using the meta-rounding algorithm of Carr and Vempala \cite{CarrV02}. 
Under some mild assumptions, the meta-algorithm uses an LP-based $\alpha$-approximation algorithm as a black-box and gives an efficient procedure to obtain a convex combination of integer solutions that is dominated by $\alpha x$, where $x$ is a (feasible) fractional LP solution.
We defer our proof of the following result to a subsequent full version of our paper.
 
\begin{quote}
Let $\bG = (\bV,\bE)$ be a complete graph on $n$ vertices.
Let $x \in \Rp^{\bE}$ be a fractional half-integral solution to $(\tsplp)$ (or equivalently, $(\twoecmlp)$) i.e., $x$ satisfies the constraints \eqref{eq:subtourtsp}-\eqref{eq:nonnegtsp}.
In $\mathsf{poly}(n)$-time, we can obtain $2$-edge connected spanning multisubgraphs $H_1,\dots,H_k$ and nonnegative real numbers $\cc_1,\dots,\cc_k$, $\sum_{i=1}^k \cc_i = 1$ satisfying $\sum_{i=1}^k \cc_i \chi^{E(H_i)} \leq \frac43 x$.
\end{quote}

Given a half-integral solution $x$ to $(\tsplp)$ for $\bG$, let $G = (V,E)$ denote the multigraph induced by $2x$.
Formally, the vertex-set $V := \bV$, and for each edge $e \in \bE$, the edge-set $E$ has $2 x_e$ copies of the edge $e$.
Note that if $|\bV| \geq 3$, then $2 x_e \in \{0,1,2\}$ for all $e \in \bE$, 
and if $|\bV| = 2$, then $2 x_e = 4$ for the unique edge $e \in \bE$. 
With a slight abuse of notation, we use the same cost function $c$
to denote the edge costs in $G$ i.e., $c_f := c_e$ where $e \in \bE$
gave rise to the edge $f \in E$.
By \eqref{eq:degtsp} and \eqref{eq:subtourtsp}, $G$ is a $4$-regular $4$-edge connected multigraph.
Theorem~\ref{thm:fourthirdsalgo} follows from the following result applied to the graph $G$ induced by $2x$.

\begin{theorem} \label{thm:twothirds}
Let $G = (V,E)$ be a $4$-regular $4$-edge connected multigraph on $n$ vertices. 
Let $c : E \to \R$ be an arbitrary cost function on the edges of $G$ (negative costs on the edges are allowed), and let $e$ be an arbitrary edge in $G$.
Then, in $O(n^2)$ time, we can find a $2$-edge connected spanning subgraph
$H$ of $G - e$ satisfying:
\begin{enumerate}[(i)]
\item $c(H) \leq \frac23 c(G-e)$; and
\item each multiedge of $G$ appears at most once in $H$ (multiedges may arise in $H$ due to multiedges in $G$).
\end{enumerate}
\end{theorem}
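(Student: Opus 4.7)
The plan is to argue by induction on $n = |V(G)|$, using Lov\'asz's splitting-off theorem (in a suitable extended form) to reduce to a smaller $4$-regular $4$-edge-connected instance at each step, and to lift the recursive solution back.

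For the base case $n = 2$, the graph $G$ consists of four parallel edges between its two vertices, so $G-e$ has three parallel edges; ordering their costs $c_1 \le c_2 \le c_3$, I take $H$ to be the two cheapest copies. Then $H$ is $2$-edge connected, and $c(H) = c_1 + c_2 \le \frac{2}{3}(c_1+c_2+c_3)$ follows from $c_1 + c_2 \le 2 c_3$ (which holds since $c_1, c_2 \le c_3$, and is valid even when some costs are negative).

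For the inductive step $n \ge 3$, I pick any vertex $v \in V(G)$ not incident to $e$ (which exists since $e$ has only two endpoints). Because $G$ is $4$-edge-connected and $v$ has even degree $4$, Lov\'asz's splitting-off theorem provides a complete admissible splitting at $v$: a pairing $\{(va_1, vb_1),(va_2, vb_2)\}$ of $\delta_G(v)$ such that the graph $G'$ obtained by deleting $v$ and adding the two new edges $f_1 := a_1 b_1$ and $f_2 := a_2 b_2$ is again $4$-regular and $4$-edge-connected on $n-1$ vertices. I assign $f_i$ the cost $c(va_i) + c(vb_i)$, so that $c(G'-e) = c(G-e)$, and then apply the induction hypothesis to $(G', c, e)$ to obtain a $2$-edge connected spanning sub-multigraph $H' \subseteq G'-e$ with $c(H') \le \frac{2}{3} c(G'-e)$. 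I lift $H'$ to $H \subseteq G-e$ by keeping every non-$\delta(v)$ edge of $H'$ and, for each $i$, inserting the pair $\{va_i, vb_i\}$ into $H$ precisely when $f_i \in H'$. A direct computation gives $c(H) = c(H')$, so the cost bound carries over.

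A cut-by-cut check, using that admissible splittings preserve local edge-connectivity between non-$v$ vertices, shows $|H \cap \delta_G(S)| \ge |H' \cap \delta_{G'}(S \setminus \{v\})| \ge 2$ for every cut $S$ other than $S = \{v\}$. The only delicate case is $S = \{v\}$, where $|H \cap \delta_G(\{v\})| = 2 \cdot |\{i : f_i \in H'\}|$, and so one needs at least one of $f_1, f_2$ to lie in $H'$. Guaranteeing this is the main obstacle and is precisely where the ``extension'' of Lov\'asz's splitting-off theorem must enter: one has to pick the admissible splitting at $v$ so that every $2$-edge connected spanning sub-multigraph of $G'-e$ is forced to contain at least one of $f_1, f_2$. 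A natural sufficient condition is to arrange $\{f_1, f_2\}$ to lie in a common $3$-edge cut of $G'-e$ --- equivalently, in a $4$-cut of $G'$ that also contains $e$; any $2$-edge connected sub-multigraph would then have to pick at least two of the three edges of that cut, so at least one of $f_1, f_2$ must appear. Proving that such an enhanced admissible splitting always exists while simultaneously preserving $4$-edge-connectivity of $G'$ is where the paper's new technical contribution lies, and it is what I would spend the bulk of the argument on. The $O(n^2)$ running time then follows because each recursive call performs one splitting-off computation in $O(n)$ time (via known combinatorial routines on $4$-regular $4$-edge-connected multigraphs) plus constant-time lifting, over $n-2$ levels of recursion.
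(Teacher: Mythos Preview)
Your reduction has a genuine gap, and your proposed fix is not what the paper does. You split at a vertex $v$ \emph{not} incident to $e$, recurse on $(G',c,e)$, and then need at least one of $f_1,f_2$ to lie in the recursive solution $H'$ so that $v$ is not isolated after lifting. You correctly flag this as the crux, and then conjecture an ``enhanced admissible splitting'' that forces $\{f_1,f_2\}$ into a common $3$-cut of $G'-e$. There is no reason such a splitting should exist in general (think of $v$ far from $e$ in a highly connected $G$, where $f_1,f_2$ end up as redundant chords with no small cut through both and $e$), and the paper neither proves nor uses anything of that kind. Moreover, even if such a splitting existed, the inductive hypothesis hands you \emph{one} subgraph $H'$ meeting the cost bound; it gives you no control over whether $H'$ actually uses $f_1$ or $f_2$ unless you engineer the costs to force it, which your sum-of-endpoints assignment does not do.

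The paper sidesteps the whole issue by splitting at an \emph{endpoint} of $e$. Write $e=uv$ and $\delta(v)=\{uv,vx,vy,vz\}$. By the Bang-Jensen et~al.\ extension of Lov\'asz's theorem, at least two of $vx,vy,vz$ are admissible with $e$; relabel so that $vx,vy$ are admissible and $c_{vx}\ge c_{vy}$. Form $\hG=G-v+\{ux,yz\}$ and recurse with designated edge $ux$ (not $e$), obtaining $\hH\subseteq \hG-ux$. The lift is: if $yz\in\hH$, replace $yz$ by $vy,vz$; otherwise add $vy,vx$. Either way $v$ acquires exactly two edges, so the lift is a subdivision or an edge-addition-plus-subdivision and $2$-edge connectivity is automatic --- no ``at least one $f_i$'' condition is needed. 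The cost trick is to set $c_{yz}:=c_{vz}-c_{vx}$ (possibly negative, which is why the theorem allows arbitrary real costs); this makes both lifting outcomes cost exactly $c(\hH)+c_{vx}+c_{vy}$, and a two-line computation using $c_{vx}\ge c_{vy}$ then yields $c(H)\le\frac23 c(G-e)$. The freedom to choose between the two admissible partners of $e$ is what makes the inequality go through, and the possibly negative cost on $yz$ is what makes the recursion well-posed.
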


For any $F \subseteq E$, let $\chi^F \in \{0,1\}^E$ denote the characteristic
vector of $F$: $\chi^F_e = 1$ if and only if $e \in F$.  Note that
distinct multiedges in $E$ correspond to distinct coordinates in
$\chi^F$.  As mentioned before, Carr and Ravi prove the existence
of such a subgraph $H$ by showing that for any $4$-regular $4$-edge
connected multigraph $G$, there exists a finite collection
$H_1,\dots,H_k$ of $2$-edge connected spanning subgraphs of $G$ such that
$\frac23 \chi^{E(G) \setminus \{e\}}$ lies in the convex hull of
$\{\chi^{H_i}\}_i$. At a high level, their proof is inductive and
splits into two cases based on whether $G$ has a certain kind of a
tight set (a cut of size $4$).  In the first case they construct
two smaller instances of the problem by contracting each of the
shores of the tight set, and in the second case they perform two
distinct  splitting-off operations at a designated vertex to obtain
two smaller instances of the problem.  In either case, the convex
combinations from the two subinstances are merged to obtain a
convex combination for $G$. 
The first case requires gluing since the subgraphs obtained from the two subinstances need to agree on a (tight) cut. 
Merging the convex combinations arising from the second case is rather straightforward as the two subinstances are more or less independent.

Our first insight in this work is that the case from Carr and
Ravi's proof that requires the gluing step can be completely avoided, thereby unifying the analysis. 
This is discussed in Section \ref{covers}.
Our proof relies on an extension of Lov\'{a}sz's splitting-off
theorem that is due to Bang-Jensen et~al., \cite{BangJensenGJS99}.
For further discussion on splitting-off theorems, see
\cite[Chapter~8]{frank:book}.
The challenge in efficiently finding a cheap subgraph $H$ from the above convex combination construction is that each inductive step requires solving two subinstances of the problem, each with one fewer vertex, leading to an exponential-time algorithm.
Having said that, an (expected) polynomial-time Las Vegas randomized algorithm can be easily designed that randomly recurses on one of the two subinstances and produces a $2$-edge connected spanning subgraph whose expected cost is at most $\frac23 c(G-e)$.
Our second insight, which is used in derandomizing the above procedure, is that it is easy to recognize which of the two subinstances leads to a ``cheaper'' solution, so we recurse only on the cheaper subinstance.
Complementing this step, we lift the solution back to the original instance.
This operation can lead to two different outcomes so the cost analysis must account for the worst outcome.
There is a choice of defining the costs in the subinstance such that the cost of the lifted subgraph is the same irrespective of the outcome.
Such a choice can lead to negative costs, but this is not a hindrance for our inductive step because Theorem~\ref{thm:twothirds} allows arbitrary real-valued edge costs. 
This generality of cost functions is crucial to our algorithm.

In Section~\ref{sec:7over8} we consider a well-studied special case of the $\twoecm$ problem. 
We present a simple $O(n^3)$-time algorithm that given a $3$-regular $3$-edge
connected graph $G$, finds a $2$-edge connected spanning multisubgraph of cost at most $\frac78 c(G)$ (see Theorem~\ref{thm:7over8}).
The proof is inspired by that of Haddadan, Newman, and Ravi in \cite{HaddadanNR19} where they give a polynomial-time algorithm for this problem with a factor $\frac{15}{17} (> \frac78)$.
In \cite[Theorem~1.1]{HaddadanN18}, Haddadan and Newman improve this result to a factor $\frac78$, and very recently, in \cite[Theorem~1.20]{Haddadan:thesis}, Haddadan claims a stronger result with a factor of $\frac{41}{47} = \frac78 - \frac{1}{376}$.
We remark that these proofs are longer and/or more complicated than that of Theorem~\ref{thm:7over8}.
Another motivation for Section~\ref{sec:7over8} is to illustrate the potential of Theorem~\ref{thm:twothirds} in giving simpler proofs for results that may not have any explicit half-integrality restrictions.

\subsection{Related Work} \label{related}

The $\twoecm$ problem has been intensively studied in network design and several works have tried to bound the integrality gap $\twoecmig$ of $(\twoecmlp)$.
For the general case with metric costs, we have $\frac65 \leq \twoecmig \leq \frac32$, where the lower bound is from \cite{AlexanderBE06} and the upper bound follows from the polyhedral analysis of Wolsey \cite{Wolsey80} and Shmoys and Williamson \cite{ShmoysW90} (this analysis also gives a $\frac32$-approximation algorithm).  It is generally conjectured that $\twoecmig = \frac43$, however in \cite{AlexanderBE06}, Alexander et~al., study $\twoecmig$ and conjecture that $\twoecmig = \frac65$ based on their findings.   
As mentioned before, Carr and Ravi \cite{CarrR98} show that the integrality gap of $(\twoecmlp)$ is at most $\frac43$ in the half-integral case.  In 
\cite{BoydL17} Boyd and Legault consider a more restrictive collection of instances called half-triangle instances where the optimal LP solution is half-integral and the graph induced by the half-edges is a collection of disjoint triangles.
They prove that $\twoecmig = \frac65$ in this setting.
Half-triangle solutions are of interest as there is evidence that the integrality gap of $(\twoecmlp)$ is attained at such solutions (see \cite{AlexanderBE06}).
When the costs come from a graphic metric (i.e., we want to find a minimum-size $2$-edge connected spanning multisubgraph of a given unweighted graph), we have $\frac87 \leq \twoecmig \leq \frac43$ (see \cite{BoydFS16,SeboV14}).

\section{A Simpler Proof of a Result of Carr and Ravi} \label{covers}

In this section, we give a simplified proof of the following result
from \cite{CarrR98}.
As mentioned before, avoiding the case involving the gluing operation is useful for our algorithm in Section~\ref{algo}.
For notational convenience, for any subgraph $K$ of some graph, we use $\chi^K$ to denote $\chi^{E(K)}$ whenever the underlying graph is clear from the context. 

\begin{theorem}[Statement~1 from \cite{CarrR98}] \label{uniformcover}
Let $G = (V,E)$ be a $4$-regular $4$-edge connected multigraph and $e = uv$ be an arbitrary edge in this graph.
There exists a finite collection $\{H_1,\dots,H_k\}$ of $2$-edge connected spanning subgraphs of $G-e$ such that for some nonnegative $\cc_1,\dots,\cc_k$ with $\sum_i \cc_i = 1$, we have $\frac23 \chi^{E \setminus \{e\}} = \sum_{i=1}^k \cc_i \chi^{H_i}$.
Moreover, we may assume that none of the $H_i$'s use more than one copy of an edge in $E$; $H_i$ may have multiedges as long as they come from distinct edges in $G$.
\end{theorem}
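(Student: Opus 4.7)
The plan is to proceed by strong induction on $n := |V|$.  In the base case $n = 2$, the graph $G$ consists of four parallel edges between $u$ and $v$, so $G - e$ has three parallel edges; the three $2$-element subsets of these edges form 2-edge connected spanning subgraphs $H_1, H_2, H_3$, and weighting each with $\cc_i = \tfrac{1}{3}$ yields $\tfrac{2}{3}\chi^{E \setminus \{e\}}$ since every edge of $G-e$ lies in exactly two of the subsets.

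For the inductive step ($n \geq 3$), let $f_1, f_2, f_3$ denote the three edges at $u$ other than $e = uv$, with $w_i$ the other endpoint of $f_i$ (the $w_i$ may coincide, possibly with $v$).  Since $\deg_G(u) = 4$, the edges at $u$ admit exactly three complete pairings; for $i \in \{1,2,3\}$, pairing $\pi_i$ matches $e$ with $f_i$ and pairs the remaining $f_j, f_k$ with one another, where $\{j,k\} = \{1,2,3\} \setminus \{i\}$.  Splitting off according to $\pi_i$ deletes $u$ and introduces two split edges $e'_i = vw_i$ and $f'_i = w_jw_k$, yielding a graph $G_i$ on $n-1$ vertices.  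The central claim, to be proved via the extension of Lov\'asz's edge-splitting theorem due to Bang-Jensen et al.\ \cite{BangJensenGJS99}, is that each $\pi_i$ is \emph{admissible}, meaning $G_i$ is again a $4$-regular $4$-edge connected multigraph.  This admissibility claim is what lets us sidestep the gluing/tight-set case of Carr--Ravi, and I expect its verification---especially in the degenerate configurations $w_j = w_k$ (creating a loop at $w_j$) or $w_i \in \{v, w_j, w_k\}$ (creating parallel split edges)---to be the main obstacle.

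Granting the admissibility of each $\pi_i$, the induction hypothesis applied to $(G_i, e'_i)$ produces a convex decomposition $\sum_j \cc^i_j \chi^{H^i_j} = \tfrac{2}{3} \chi^{E(G_i) \setminus \{e'_i\}}$ with $H^i_j$ a 2-edge connected spanning subgraph of $G_i - e'_i$ using each multiedge at most once.  I lift each $H^i_j$ to a subgraph $\hat{H}^i_j$ of $G - e$ by keeping the edges of $H^i_j$ that lie in $E(G - u)$ and \emph{always} adjoining $f_j$ and $f_k$ at $u$, regardless of whether $f'_i \in H^i_j$.  When $f'_i \in H^i_j$, $\hat{H}^i_j$ is a subdivision of $H^i_j$ at the edge $f'_i$; when $f'_i \notin H^i_j$, it augments $H^i_j$ by attaching the new degree-$2$ vertex $u$ via $f_j, f_k$ to the already 2-edge connected $H^i_j$.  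Both operations preserve 2-edge connectivity and use no multiedge of $E$ more than once, so $\hat{H}^i_j$ is a valid 2-edge connected spanning subgraph of $G - e$.

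Finally, I combine the three sub-instances uniformly:
\[
  \sum_{i=1}^3 \tfrac{1}{3} \sum_j \cc^i_j \chi^{\hat{H}^i_j}.
\]
The coordinate at $f_\ell$ receives $0$ from sub-instance $\ell$ (where $f_\ell$ is never lifted) and $1$ from each of the other two sub-instances (where $f_\ell$ is always lifted), summing to $\tfrac{1}{3}(0 + 1 + 1) = \tfrac{2}{3}$; the coordinate at $e$ is $0$ since $e$ is excluded from every lift; and every other edge of $G - u$ inherits the value $\tfrac{2}{3}$ from the induction hypothesis applied to each sub-instance.  The combination therefore equals $\tfrac{2}{3} \chi^{E \setminus \{e\}}$, completing the inductive step.
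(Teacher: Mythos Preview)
Your central claim---that \emph{all three} complete splittings $\pi_1,\pi_2,\pi_3$ at $u$ yield $4$-regular $4$-edge connected graphs---is false in general, and this is a genuine gap, not a technicality to be patched later. The Bang-Jensen et~al.\ result (Lemma~\ref{bangjensenlemma}) guarantees only $|A_e|\ge \tfrac12\deg_G(u)=2$; it does not guarantee $|A_e|=3$. Suppose $(e,f_3)$ is \emph{not} admissible. Then there is a set $S\subsetneq V\setminus\{u\}$ with $v,w_3\in S$ and $|\delta_{G_{v,w_3}}(S)|\le 3$ (the only way the first split can drop a cut is when both new endpoints land on the same side). One checks that $w_1,w_2$ cannot both lie in $S$ (else $|\delta_G(S)|\ge 8$), and in the remaining cases the second split $(f_1,f_2)$ does not increase $|\delta(S)|$: if $w_1,w_2\notin S$ neither $f_1,f_2$ nor $w_1w_2$ crosses, and if exactly one of $w_1,w_2$ is in $S$ you trade one crossing edge for another. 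Hence $|\delta_{G_3}(S)|\le 3$ and $G_3$ is not $4$-edge connected, so you cannot apply the induction hypothesis to it. Concretely, the configurations in Figure~\ref{fig:fourcases}(b)--(d) already force $|A_e|=2$ (e.g.\ if $w_1=w_2$ then $\pi_3$ produces a loop at $w_1$ and $|\delta_{G_3}(\{w_1\})|=2$), and the bound $|A_e|=2$ can also occur with four distinct neighbors.

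The paper's proof works with only the \emph{two} guaranteed admissible pairs, say $(e,vx)$ and $(e,vy)$, and averages the two resulting subinstances with weight $\tfrac12$ each. The price for using two rather than three subinstances is that the lift cannot be the symmetric ``always add the two unpaired edges'' rule you propose; instead the lift of $H^1_i$ depends on whether the second split edge $yz$ lies in $H^1_i$, adding $\{vy,vz\}$ if it does and $\{vy,vx\}$ if it does not (and symmetrically for $G_2$). This asymmetry is exactly what makes the coordinates at $vx,vy,vz$ each come out to $\tfrac23$ when the two subinstances are averaged. Your uniform three-way scheme would give a cleaner bookkeeping, but it rests on an admissibility statement that the splitting-off theory does not supply.
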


\subsection{Operations involving splitting-off at a vertex} \label{sec:splitoff}

The following tools on the splitting-off operation will be useful.
In keeping with standard terminology, we designate a vertex $v$ (one of the endpoints of $e$ in the theorem statement) at which the splitting-off operation is applied.
For a multigraph $H=(V,E)$ and $x,y \in V$, let $\ec_H(x,y)$ denote the size of a minimum $(x,y)$-cut in $H$, and let $\deg_H(x)$ denote the degree of $x$ in $H$. Note that each multiedge is counted separately towards the degree of a vertex and the size of a cut.

\begin{definition} \label{splitoffdefn}
Given a multigraph $G$ and two edges $sv$ and $vt$ that share an endpoint $v$, the graph $G_{s,t}$ obtained by \emph{splitting off} the pair $(sv,vt)$ at $v$ is given by $G + st - sv - vt$.
\end{definition}

\begin{definition}
	Given a multigraph $G$ and a vertex $v$ of $G$ of even degree, a \emph {complete splitting at $v$} is a sequence of $\frac12 deg_G(v)$ splitting off operations that result in vertex $v$ having degree zero in the resulting graph.
\end{definition}

\begin{definition} \label{admissibledefn}
Let $k \geq 2$ be an integer and let $G$ be a multigraph such that for all $x,y \in V \setminus \{v\}$, ${\ec_G(x,y) \geq k}$.
Let $e =sv$ and $vt$ be two edges incident to $v$.
We say that the pair $(sv,vt)$ is \emph{admissible} if for all $x,y \in V \setminus \{v\}$, $\ec_{G_{s,t}}(x,y) \geq k$, and for a particular edge $e \in \delta(v)$, we let $A_e$ denote the set of edges $f\in \delta(v)\setminus \{e\}$ such that $(e,f)$ is an admissible pair.
\end{definition}

The following result due to Bang-Jensen et~al., \cite{BangJensenGJS99} shows that in our setting with a $4$-regular $4$-edge connected multigraph at least two distinct edges incident to $v$ form an admissible pair with $e = uv$. 
Using this we can perform a complete splitting at $v$ in two distinct ways. 

\begin{lemma}[Theorem~2.12 from \cite{BangJensenGJS99}] \label{bangjensenlemma}
Let $k \geq 2$ be an even integer. 
Let $G$ be a multigraph such that for all $x,y \in V \setminus \{v\}$, $\ec_{G}(x,y) \geq k$.
Let $\deg_G(v)$ be even (each multiedge is counted separately towards the degree).
Then, $|A_{uv}| \geq \frac12 \deg_G(v)$.
\end{lemma}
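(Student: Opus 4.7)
The plan is to characterise admissible pairs via ``dangerous'' cuts and to bound, by submodular uncrossing, the number of edges at $v$ that are forced to be inadmissible with $uv$. Call a set $X$ \emph{dangerous} if $\emptyset \neq X \subsetneq V \setminus \{v\}$ and $d_G(X) \leq k+1$. Splitting off $(uv, tv)$ decreases $d_G(X)$ by exactly $2$ when $\{u,t\} \subseteq X$ and leaves it unchanged otherwise, so the pair $(uv, tv)$ is inadmissible if and only if some dangerous set $X$ contains both $u$ and $t$. The first ingredient is a \textbf{balance estimate}: every dangerous $X$ satisfies $e(v, X) \leq \tfrac12 \deg_G(v)$. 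Writing $a = e(v, X)$, $b = e(v, V \setminus (X \cup \{v\}))$ and $c = e(X, V \setminus (X \cup \{v\}))$, the hypothesis gives $d_G(X) = a + c \leq k+1$ and $d_G(V \setminus (X \cup \{v\})) = b + c \geq k$ (since the complement of $X$ in $V \setminus \{v\}$ is also a nontrivial subset of $V \setminus \{v\}$). Subtracting yields $a - b \leq 1$, and because $a + b = \deg_G(v)$ is even, $a \leq \tfrac12 \deg_G(v)$.

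Next, let $T := \bigcup \{X : X \text{ dangerous}, u \in X\}$. The inadmissible partners of $uv$ in $\delta(v)$ are exactly the edges from $v$ with other endpoint in $T$; accounting for the edge $uv$ itself going from $v$ into $T$ (the case $\mathcal{D}_u := \{X : X \text{ dangerous}, u \in X\} = \emptyset$ is trivial), one obtains $|A_{uv}| = \deg_G(v) - e(v, T)$, so it suffices to show $e(v, T) \leq \tfrac12 \deg_G(v)$. A quick observation rules out $T = V \setminus \{v\}$: if $X, Y \in \mathcal{D}_u$ satisfy $X \cup Y = V \setminus \{v\}$, then the balance estimate plus inclusion-exclusion yield
\[
\deg_G(v) = e(v, X \cup Y) = e(v, X) + e(v, Y) - e(v, X \cap Y) \leq \tfrac12 \deg_G(v) + \tfrac12 \deg_G(v) - 1,
\]
a contradiction, using $e(v, X \cap Y) \geq 1$ (since $u \in X \cap Y$ and $uv \in E$).

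It remains to prove that $T$ itself is dangerous, since then applying the balance estimate to $T$ gives the desired bound on $e(v, T)$. For this I would uncross $\mathcal{D}_u$: for any $X, Y \in \mathcal{D}_u$, the preceding step rules out $X \cup Y = V \setminus \{v\}$, and submodularity gives
\[
d_G(X \cup Y) \leq d_G(X) + d_G(Y) - d_G(X \cap Y) \leq (k+1) + (k+1) - k = k+2,
\]
so $X \cup Y$ is dangerous except possibly in the \textbf{borderline case} $d_G(X \cup Y) = k+2$, which forces $d_G(X) = d_G(Y) = k+1$ and $d_G(X \cap Y) = k$. Ruling out this borderline case is the main obstacle; I would handle it by combining the hypothesis that $k$ is even with the parity identity $d_G(Z) + d_G(V \setminus (Z \cup \{v\})) \equiv \deg_G(v) \equiv 0 \pmod 2$ (which constrains the $d_G$-parities of $X$, $Y$ and their unions/intersections relative to their complements) together with posimodularity applied to the complements $V \setminus (X \cup \{v\})$ and $V \setminus (Y \cup \{v\})$. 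Once the borderline case is excluded, $\mathcal{D}_u$ is closed under union, so iterated uncrossing produces a unique inclusion-maximal member $M \in \mathcal{D}_u$ containing every element of $\mathcal{D}_u$; thus $M = T$ is dangerous, and the balance estimate yields $e(v, T) \leq \tfrac12 \deg_G(v)$, completing the proof.
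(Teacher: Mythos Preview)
The paper does not prove this lemma at all; it is quoted as Theorem~2.12 of \cite{BangJensenGJS99} and used as a black box. So there is no proof in the paper to compare your attempt against.

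On its own merits, your framework is the standard one and the first three quarters are correct: the characterisation of inadmissibility via dangerous sets, the balance estimate $e(v,X)\le\frac12\deg_G(v)$ (with the parity sharpening from $a-b\le 1$ to $a-b\le 0$), the exclusion of $X\cup Y=V\setminus\{v\}$, and the reduction to showing that the union $T$ of all dangerous sets through $u$ is itself dangerous are all fine. The genuine gap is exactly where you flag it, and your proposed fix does not close it. In the borderline configuration $d_G(X)=d_G(Y)=k+1$, $d_G(X\cap Y)=k$, $d_G(X\cup Y)=k+2$, the equality form of posimodularity on $X,Y$ forces $d_G(X\setminus Y)=d_G(Y\setminus X)=k$, $e(X\setminus Y,\,Y\setminus X)=0$, and $e\bigl(X\cap Y,\,V\setminus(X\cup Y)\bigr)=1$ (the single edge $uv$), hence $e(v,X\cap Y)=1$. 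Feeding this into the posimodular identity for the complements $X^c:=V\setminus(X\cup\{v\})$ and $Y^c$ gives
\[
d_G(X^c)+d_G(Y^c)=d_G(X\setminus Y)+d_G(Y\setminus X)+2\,e\bigl((X\cup Y)^c,\{v\}\bigr)=2k+2\,e(v,(X\cup Y)^c),
\]
which is perfectly consistent with the parity constraint $d_G(X^c),d_G(Y^c)\ge k+1$ and produces no contradiction. In short, parity plus posimodularity on the complements is not enough to exclude the borderline case; the argument in \cite{BangJensenGJS99} (see also Chapter~8 of \cite{frank:book}) needs additional structural work at this point to establish that the maximal dangerous set through $u$ is unique when $k$ is even. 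As written, your proof is incomplete.
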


\begin{lemma} \label{lem:twooptions}
Let $G$ be a $4$-regular $4$-edge connected multigraph and $e=vx$ be an edge incident to $v$.
Then, (i) $|A_e| \ge 2$; and (ii) if $(e,f)$ is an admissible pair for some $f = vy \in \delta(v)\setminus \{e\}$, then the remaining two edges in $\delta(v)\setminus \{e,f\}$ form an admissible pair in $G_{x,y}$.
\end{lemma}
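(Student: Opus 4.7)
The plan is to derive both parts directly from Lemma~\ref{bangjensenlemma}, which is the workhorse here; no additional splitting-off theory is needed.

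For part (i), I would apply Lemma~\ref{bangjensenlemma} to $G$ with $k=4$ and splitting vertex $v$. The hypotheses are immediate: $G$ is $4$-edge connected so $\ec_G(x,y) \geq 4$ for every pair $x,y \in V$ (in particular for pairs in $V \setminus \{v\}$), and $\deg_G(v) = 4$ is even. The lemma then yields $|A_e| \geq \frac{1}{2}\deg_G(v) = 2$ for the given edge $e \in \delta(v)$.

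For part (ii), let $\delta(v) = \{e, f, g, h\}$ with $e = vx$, $f = vy$, and let $g, h$ be the remaining two edges, so that $G_{x,y}$ is the graph obtained by splitting off the pair $(e,f)$. Because $(e,f)$ is admissible in $G$, Definition~\ref{admissibledefn} gives $\ec_{G_{x,y}}(x',y') \geq 4$ for every $x',y' \in V \setminus \{v\}$. In $G_{x,y}$, vertex $v$ has degree exactly $2$ (it is incident only to $g$ and $h$), which is even. I would then apply Lemma~\ref{bangjensenlemma} a second time, this time to $G_{x,y}$ with $k=4$ and splitting vertex $v$. The conclusion is $|A_g| \geq \frac{1}{2}\deg_{G_{x,y}}(v) = 1$, where $A_g$ is computed in $G_{x,y}$.

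The final observation is that since $\delta_{G_{x,y}}(v) \setminus \{g\} = \{h\}$, the set $A_g$ is forced to be $\{h\}$, so $(g,h)$ is admissible in $G_{x,y}$, as desired. I do not anticipate any real obstacle: the only thing to be careful about is verifying the connectivity hypothesis of Lemma~\ref{bangjensenlemma} after the first split, and this is exactly the conclusion of admissibility, so the two applications chain cleanly.
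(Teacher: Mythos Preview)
Your proposal is correct and follows essentially the same approach as the paper: both parts are obtained by applying Lemma~\ref{bangjensenlemma} with $k=4$, first to $G$ (using $4$-regularity and $4$-edge connectivity) and then to $G_{x,y}$ (using that admissibility of $(e,f)$ preserves the connectivity hypothesis and that $\deg_{G_{x,y}}(v)=2$ is even), with the second application forcing $A_g=\{h\}$.
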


\begin{proof}
Conclusion (i) follows from Lemma~\ref{bangjensenlemma} since $G$ is $4$-regular and $4$-edge~connected.
For conclusion (ii), let $f \in \delta(v)\setminus \{e\}$ be such that $(e,f)$ forms an admissible pair in $G$.
Let $G_{x,y}$ denote the graph obtained by splitting off the pair
$(e=vx,f=vy)$ i.e., $G_{x,y} = G - vx - vy + xy$.
Observe that the hypothesis of Lemma~\ref{bangjensenlemma} still
holds for $G_{x,y}$ with $k=4$ because (a) we performed a splitting off operation 
using an admissible pair of edges; and (b) $\deg_{G_{x,y}}(v) = 2$ is even.
Let $g$ denote one of the two remaining edges in $\delta(v)\setminus \{e,f\}$. 
By Lemma~\ref{bangjensenlemma}, the other unique edge $h \in \delta(v)\setminus \{e,f,g\}$ forms an admissible pair with $g$ in $G_{x,y}$.
\end{proof}

Equipped with the above tools, we give a proof of Theorem~\ref{uniformcover}.

\begin{proofof}{Theorem~\ref{uniformcover}}
Let $G=(V,E)$ be a $4$-regular $4$-edge connected multigraph and let $e=uv$ be an arbitrary edge in $G$.
We prove this theorem via induction on $n := |V(G)|$.
The base case $n = 2$ corresponds to a pair of vertices having four parallel edges, call them $e,f,g,h$.
Observe that $\frac23 \chi^{E \setminus \{e\}} = \frac13 \bigl( \chi^{\{f,g\}} + \chi^{\{f,h\}} + \chi^{\{g,h\}} \bigr)$, so the induction hypothesis is true for the base case.

For the induction step, suppose that $n \geq 3$ and the hypothesis holds for all $4$-regular $4$-edge connected multigraphs with at most $n-1$ vertices and for all choices of $e \in E$.
Consider a $4$-regular $4$-edge connected multigraph $G$ on $n$ vertices and an arbitrary edge $e = uv \in E(G)$.
Besides $e$, let $vx,vy,vz$ be the other three edges incident to $v$.
With a relabeling of vertices, by Lemma~\ref{lem:twooptions}, we may assume that $(uv,vx)$ and $(uv,vy)$ form an admissible pair in $G$ (see Figure~\ref{fig:fourcases}).

{
\begin{figure}[hbt]
    \begin{subfigure}{0.20\textwidth}
	\centering
	\begin{tikzpicture}[scale=0.70]
			\begin{scope}[every node/.style={draw=none}]
			\node (u) at (2.5,4.5) {$u$};
			\node  (v) at (2.5,2.5) {$v$};
			\node  (x) at (0,0) {$x$};
			\node  (y) at (2,0) {$y$};
			\node  (z) at (4,0) {$z$};	
			\end{scope}
			
			\begin{scope}[every	node/.style={draw,circle,minimum size=1mm,inner sep=0pt,outer sep=0pt, fill=black}]
			\node (uu) at (2,4.5) {};
			\node  (vv) at (2,2.5) {};
			\node  (xx) at (0,0.5) {};
			\node  (zz) at (2,0.5) {};
			\node  (yy) at (4,0.5) {};	
			\end{scope}
			\begin{scope}[
				every edge/.style={draw=black}]
			\path [-] (uu) edge node[left] {$e$} (vv);
			\path [-] (zz) edge (vv);
			\path [-] (xx) edge (vv);
			\path [-] (yy) edge (vv);
			\end{scope}
		\end{tikzpicture}
	\caption{\centering
	$v$ has four distinct neighbors \newline $|\adm| \in \{2,3\}.$}
	\label{case1}
    \end{subfigure}
    \hspace*{\fill}
    \begin{subfigure}{0.20\textwidth}
	\centering
	\begin{tikzpicture}[scale=0.70]
			\begin{scope}[every node/.style={draw=none}]
			\node (u) at (2.8,4.5) {$u=z$};
			\node  (v) at (2.5,2.5) {$v$};
			\node  (x) at (0,0) {$x$};
			\node  (z) at (4,0) {$y$};	
			\end{scope}
			
			\begin{scope}[every	node/.style={draw,circle,minimum size=1mm,inner sep=0pt,outer sep=0pt, fill=black}]
			\node (uu) at (2,4.5) {};
			\node  (vv) at (2,2.5) {};
			\node  (xx) at (0,0.5) {};
			\node  (yy) at (4,0.5) {};	
			\end{scope}
			
			\begin{scope}[
				every edge/.style={draw=black}]
			\path [-] (uu) edge[bend right] node[left] {$e$} (vv);
			\path [-] (uu) edge[bend left](vv);
			\path [-] (xx) edge (vv);
			\path [-] (yy) edge (vv);
			\end{scope}
		\end{tikzpicture}
	\caption{\centering $v$ has two parallel edges with $u$ \newline $\adm = \{vx,vy\}$.}
	\label{case2}
    \end{subfigure}
    \hspace*{\fill}
    \begin{subfigure}{0.20\textwidth}
	\centering
	\begin{tikzpicture}[scale=0.70]
			\begin{scope}[every node/.style={draw=none}]
			\node (u) at (2.5,4.5) {$u$};
			\node  (v) at (2.5,2.5) {$v$};
			\node  (x) at (0,0) {$x=y$};
			\node  (z) at (4,0) {$z$};	
			\end{scope}
			
			\begin{scope}[every	node/.style={draw,circle,minimum size=1mm,inner sep=0pt,outer sep=0pt, fill=black}]
			\node (uu) at (2,4.5) {};
			\node  (vv) at (2,2.5) {};
			\node  (xx) at (0,0.5) {};
			\node  (yy) at (4,0.5) {};	
			\end{scope}
			
			\begin{scope}[
				every edge/.style={draw=black}]
			\path [-] (uu) edge node[left] {$e$} (vv);
			\path [-] (xx) edge[bend left] (vv);
			\path [-] (xx) edge[bend right] (vv);
			\path [-] (yy) edge (vv);
			\end{scope}
		\end{tikzpicture}
	\caption{\centering $v$ has two parallel edges with $x, x \neq u$ \newline $\adm=\{vx,vy\}$.}
	\label{case3}
    \end{subfigure}
    \hspace*{\fill}
    \begin{subfigure}{0.20\textwidth}
	\centering
	\begin{tikzpicture}[scale=0.70]
			\begin{scope}[every node/.style={draw=none}]
			\node (u) at (2.8,4.5) {$u = z$};
			\node  (v) at (2.5,2.5) {$v$};
			\node  (x) at (2,0) {$x = y$};
			\end{scope}
			
			\begin{scope}[every	node/.style={draw,circle,minimum size=1mm,inner sep=0pt,outer sep=0pt, fill=black}]
			\node (uu) at (2,4.5) {};
			\node  (vv) at (2,2.5) {};
			\node  (xx) at (2,0.5) {};
			\end{scope}
			\begin{scope}[
				every edge/.style={draw=black}]
			\path [-] (uu) edge[bend right] node[left] {$e$} (vv);
			\path [-] (uu) edge[bend left] (vv);
			\path [-] (xx) edge[bend left] (vv);
			\path [-] (xx) edge[bend right] (vv);
			\end{scope}
		\end{tikzpicture}
	\caption{\centering $v$ has two parallel edges to each of $\{u,x\}$ \newline $\adm=\{vx,vy\}$.}
	\label{case4}
    \end{subfigure}
    \hspace*{\fill}
\caption{Four configurations of edges in $\delta(v) = \{ uv, vx, vy, vz \}$ that can arise in our proof.}
\label{fig:fourcases}
\end{figure}
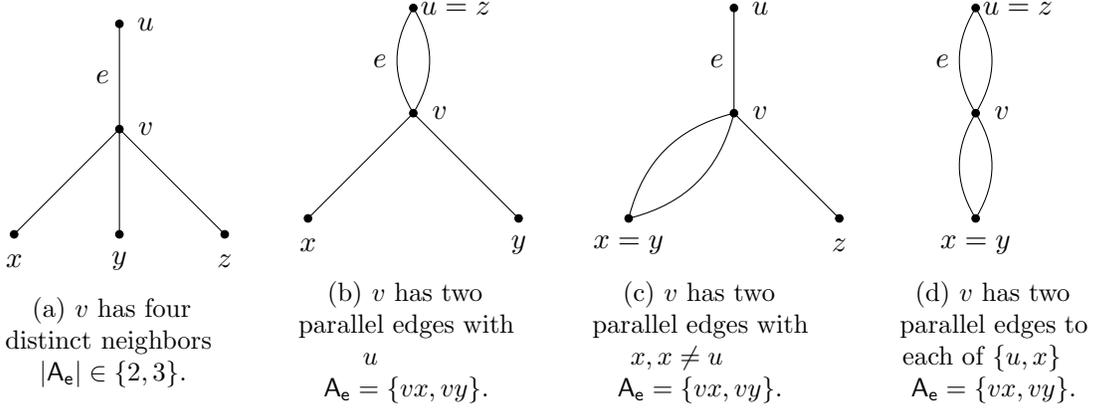
}

By the second conclusion of Lemma~\ref{lem:twooptions}, $(vy,vz)$ is an admissible pair in $G_{u,x}$, and $(vx,vz)$ is an admissible pair in $G_{u,y}$.
Consider the graph $G_1$ obtained by splitting off the pair $(vy,vz)$ in $G_{u,x}$ i.e., $G_1 = G - v + \{ux,yz\}$; it is customary to drop the vertex $v$ after all its edges have been split off.
Similarly, let $G_2$ be the graph obtained by splitting off the pair $(vx,vz)$ in $G_{u,y}$ i.e., $G_2 = G - v + \{uy,xz\}$.

Since we only split off admissible pairs, both $G_1$ and $G_2$ are $4$-regular $4$-edge connected multigraphs on $n-1$ vertices.
Recall that for any subgraph $K$ of some graph, $\chi^K$ is a shorthand for $\chi^{E(K)}$ whenever the underlying graph is clear from the context.
Applying the induction hypothesis to $G_1$ with the designated edge $e_1 = ux$ gives:
\begin{equation} \tag{ConvexComb$\mhyphen G_1$} \label{eq:comb1}
\frac23 \cdot \chi^{E(G_1) \setminus \{ e_1 \}} = \frac23 \cdot \chi^{(E \setminus \delta(v)) \cup \{yz\}} = \sum_{i=1}^{k_1} \cc^1_i \chi^{H^1_i} \ ,
\end{equation}
where $\{\cc^1_i\}_i$ denote the coefficients in a convex combination, and $\{H^1_i\}_i$ are $2$-edge connected spanning subgraphs of $G_1$ such that none of them use more than one copy of an edge in $G_1$.
Repeating the same argument for $G_2$ with the designated edge $e_2 = uy$ gives:
\begin{equation} \tag{ConvexComb$\mhyphen G_2$} \label{eq:comb2}
\frac23 \cdot \chi^{E(G_2) \setminus \{ e_2 \}} = \frac23 \cdot \chi^{(E \setminus \delta(v)) \cup \{xz\}} = \sum_{i=1}^{k_2} \cc^2_i \chi^{H^2_i} \ ,
\end{equation}
where $\{\cc^2_i\}_i$ denote the coefficients in the other convex combination arising from $\{H^2_i\}_i$.
It remains to combine \eqref{eq:comb1}~and~\eqref{eq:comb2} to obtain such a representation for $G$ with the designated edge $e$. 
We mimic the strategy from \cite{CarrR98}.

For each $i \in \{1,\dots,k_1\}$, we lift $H^1_i$ to a spanning subgraph $\hH^1_i$ of $G-e$.
Define $\hH^1_i$ as follows:
\begin{equation} \tag{Lift$\mhyphen G_1$} \label{eq:lift1}
\begin{aligned}
\hH^1_i & := \begin{cases} H^1_i - yz + vy + vz & \text{ if } yz \in E(H^1_i), \\ H^1_i + vy + vx &  \text{ if } yz \notin E(H^1_i).
\end{cases}
\end{aligned}
\end{equation}

Similarly, for each $i \in \{1,\dots,k_2\}$, we define $\hH^2_i$ as the following spanning subgraph of $G-e$:
\begin{equation} \tag{Lift$\mhyphen G_2$} \label{eq:lift2}
\begin{aligned}
\hH^2_i & := \begin{cases} H^2_i - xz + vx + vz & \text{ if } xz \in E(H^2_i), \\ H^2_i + vx + vy & \text{ if } xz \notin E(H^2_i).
\end{cases}
\end{aligned}
\end{equation}

We finish the proof of Theorem~\ref{uniformcover} by arguing that the following convex combination meets all the requirements:
\begin{equation} \tag{ConvexComb$\mhyphen G$} \label{eq:comb3}
\ccvec := \frac12 \sum_{i=1}^{k_1} \cc^1_i \chi^{\hH^1_i} + \frac12 \sum_{i=1}^{k_2} \cc^2_i \chi^{\hH^2_i} \ .
\end{equation}

Many of our arguments are the same for $G_1$ and $G_2$ so we just mention them in the context of $G_1$.
First of all, by the induction hypothesis and \eqref{eq:lift1} it is clear that $e (= uv), yz, ux \notin E(\hH^1_i)$, where $yz$ and $ux$ refer to the edges that originated from the splitting off operations applied at $v$.
Next, we argue that $\hH^1_i$ is a spanning subgraph of $G$ that uses no more than one copy of any edge in $G$.
By the induction hypothesis, none of the subgraphs $H^1_i$ use more than one copy of an edge in $G_1$, and $H^1_i$ spans $V \setminus \{v\}$. 
By the way we lift $H^1_i$ to $\hH^1_i$, it is clear that $\hH^1_i$ uses no more than one copy of any multiedge in $G$, and that it is spanning.
To see that $\hH^1_i$ is $2$-edge connected, observe that the two cases of lifting may be viewed as either (i) subdividing the edge $yz$ by a node $v$ when $yz \in E(H^1_i)$, or (ii) adding an edge $yx$ and subdividing it by a node $v$ when $yz \notin E(H^1_i)$. 
Clearly, these operations preserve $2$-edge connectivity, hence, $\hH^1_i$ is $2$-edge connected.

It remains to argue that the vector $\ccvec$ in the expression \eqref{eq:comb3} matches the vector $\frac23 \chi^{E(G) \setminus \{e\}}$.
Since $\{\cc^1_i\}_i$ and $\{\cc^2_i\}_i$ denote coefficients in a convex combination, taking an unweighted average of these two combinations gives us another convex combination.
Since none of the edges in $E(G) \setminus \delta(v)$ are modified in the lifting step, $\ccvec_f = 2/3$ for any such edge $f$. 
Next, consider the edge $vy$.
Observe that $\hH^1_i$ always contains the edge $vy$, whereas $\hH^2_i$ contains $vy$ only when $xz \notin E(H^2_i)$ (this happens with weight $1/3$).
Therefore, $\ccvec_{vy} = \frac12 \cdot 1 + \frac12 \cdot \frac13 = \frac23$.
The analysis for $vx$ is symmetric.
Lastly, consider the edge $vz$. It appears in $\hH^1_i$ ($\hH^2_i$) if and only if $yz \in E(H^1_i)$ (respectively, $xz \in E(H^2_i)$.
Therefore, $\ccvec_{vz} = \frac12 \cdot \frac23 + \frac12 \cdot \frac23 = \frac23$.
This completes the proof of Theorem~\ref{uniformcover}.
\end{proofof}

\section{Our Algorithm and the Proof of Theorem~\ref{thm:twothirds}} \label{algo}

In this section we give a proof of Theorem~\ref{thm:twothirds} which we use to obtain a $\frac43$-approximation algorithm for $\twoecm$ on half-integral instances (Theorem~\ref{thm:fourthirdsalgo}). 
We apply the same splitting-off theorem of \cite{BangJensenGJS99} together with an induction scheme that is captured in Theorem~\ref{thm:twothirds}. 
A key feature of this theorem is that we allow edges of negative cost, although the edge costs in any instance of $\twoecm$ are non-negative.

Consider a $4$-regular $4$-edge~connected multigraph $G = (V,E)$ on $n$ vertices, and let $e = uv$ be an edge in $G$.
Let $c : E \to \R$ be an arbitrary real-valued cost function.
Our goal is to obtain a $2$-edge connected spanning subgraph $H$ of $G$ whose cost is at most $\frac23 c(G-e)$ while ensuring that $H$ uses no more than one copy of any multiedge in $G$.
Observe that if we had access to the collection $\{H_1,\dots,H_k\}$ of $2$-edge connected spanning subgraphs from Theorem~\ref{uniformcover} for some $k$ that is polynomial in $|V(G)|$, then we would be done: for any cost function $c$, the cheapest subgraph in this collection (w.r.t. cost $c$) is one such desired subgraph.
It is not clear how to efficiently obtain such a collection; a naive algorithm that follows the proof of Theorem~\ref{uniformcover} does not run in polynomial time.

As alluded to before, for the purposes of obtaining a cheap $2$-edge connected subgraph, it suffices to only recurse on one of the two subinstances that arise in the proof of Theorem~\ref{uniformcover}.
This insight comes from working backwards from \eqref{eq:comb3}.
Since this convex combination for $G$ is a simple average of the convex combinations from the two subinstances (see \eqref{eq:comb1} and \eqref{eq:comb2}), it is judicious to only recurse on the ``cheaper'' subinstance.
Combining \eqref{eq:comb1} and \eqref{eq:lift1}, we get that the first subinstance gives rise to a convex combination for $\frac23 \chi^{E(G) \setminus \{e\}} + \frac13 ( \chi^{\{vy\}} - \chi^{\{vx\}} )$.
On the other hand, the second subinstance gives rise to a convex combination for $\frac23 \chi^{E(G) \setminus \{e\}} + \frac13 ( \chi^{\{vx\}} - \chi^{\{vy\}} )$.
Thus, we should recurse on $G_1$ if $c_{vx} \geq c_{vy}$, and $G_2$ otherwise.
For the sake of argument, suppose that we are recursing on $G_1$.
So far, we have ignored an important detail in the recursion: the splitting-off operation creates a new edge $yz$ that was not originally present in $G$, so we need to assign it some cost to apply the algorithm recursively. 
Depending on how we choose the cost of $yz$, it might either be included or excluded from the subgraph obtained for the smaller instance, so to bound the cost of the lifted solution we must have a handle on both outcomes of the lift operation.
Setting $c_{yz} := c_{vz} - c_{vx}$ balances the cost of both outcomes.
Note that $c_{yz}$ could possibly be negative, but this is permissible since the statement of Theorem~\ref{thm:twothirds} allows for arbitrary edge costs.
We formalize the above ideas.

In the recursive step, we pick one end vertex $v$ of $e$ and apply a complete splitting off operation at $v$ to obtain a $4$-regular $4$-edge connected graph on $n-1$ vertices; this can be implemented in $O(n)$ time.
The running time of the algorithm is $O(n^2)$, since we apply the induction step $O(n)$ times.
We remark that the running time of the algorithm in Theorem~\ref{thm:twothirds} can be improved to $O(n^{1+o(1)})$ by using the results for maintaining $3$-edge connectivity from the work of Jin and Sun \cite{JinS20}; we defer the details to a subsequent full version of our paper.

Let $T=\{v,x,y,z\}$ be the four neighbors of $v$ and let $e=uv$. Recall that $A_e$ denotes the set of edges $f\in \delta(v)\setminus \{e\}$ such that $(e,f)$ is an admissible pair (see Definition~\ref{admissibledefn}).

\begin{lemma} \label{checkadmissible}
   For $vx\in \delta(v)\setminus\{e\}$, we can check whether $vx\in \adm$ in $O(n)$ time.
\end{lemma}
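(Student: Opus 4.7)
The plan is to reduce the admissibility test for the pair $(e=uv, vx)$ to a single minimum $(v,w)$-cut computation in the graph $G'$ obtained from $G$ by identifying $u$ and $x$ into a single vertex $w$. Since $G$ is $4$-regular we have $|E(G)|=2n$, so $|E(G')|\le 2n$ as well, and any BFS-based max-flow computation on $G'$ whose flow value is $O(1)$ runs in $O(n)$ time.

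The key lemma I would prove first is a cut characterization: $(uv, vx)$ fails to be admissible iff there exists $T\subseteq V$ with $v\notin T$, $\{u,x\}\subseteq T$, $|V\setminus T|\ge 2$, and $|\delta_G(T)|=4$. This follows from a direct case analysis on $\delta(T)$ for $T\subseteq V\setminus\{v\}$: splitting off $(uv,vx)$ decreases $|\delta(T)|$ by exactly $2$ when $\{u,x\}\subseteq T$ (the removed edges $uv$ and $vx$ both cross $\delta(T)$, while the new edge $ux$ is internal to $T$) and leaves $|\delta(T)|$ unchanged otherwise. Since $G$ is $4$-regular every cut has even size, so ``$\le 5$'' collapses to ``$=4$''. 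The condition $|V\setminus T|\ge 2$ is the non-triviality requirement that $T$ actually separates two vertices of $V\setminus\{v\}$, which is what Definition~\ref{admissibledefn} demands.

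I would next translate this condition into a minimum $(v,w)$-cut problem on $G'$. After identifying $u$ and $x$ with $w$ (discarding self-loops arising from $ux$-edges of $G$), a bad $T$ corresponds precisely to a minimum $(v,w)$-cut $\delta(S)$ in $G'$ with $v\in S$, $w\notin S$, $|S|\ge 2$, and $|\delta_{G'}(S)|=4$. Since $\delta_G(\{v\})$ is still a $4$-edge cut in $G'$, the minimum $(v,w)$-cut in $G'$ is always exactly $4$, so the real question is whether $\{v\}$ is the unique largest minimum cut on the $v$-side. This can be read off from a max-flow residual: compute a max $(v,w)$-flow in $G'$ (always of value $4$, via at most $4$ BFS augmentations), then BFS from $w$ in the residual graph and let $R_w$ be the reachable set. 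By standard min-cut theory the largest min $(v,w)$-cut containing $v$ is exactly $V'\setminus R_w$, so we return ``admissible'' iff $R_w=V'\setminus\{v\}$.

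The overall running time is $O(n)$: the graph $G'$ is built in $O(n)$ time, each BFS runs in $O(|V(G')|+|E(G')|)=O(n)$ time, and we perform only $O(1)$ BFS calls in total. The main non-routine ingredient is the cut characterization together with the non-triviality condition $|V\setminus T|\ge 2$; once that is in place, the $O(n)$-time implementation is a textbook max-flow computation on a sparse graph with a bounded flow value.
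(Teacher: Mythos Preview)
Your approach is correct and genuinely different from the paper's. You work directly with a cut characterization of admissibility for the single pair $(uv,vx)$: a set $T$ with $v\notin T$, $\{u,x\}\subseteq T$, $|V\setminus T|\ge 2$, and $|\delta_G(T)|=4$ is exactly what makes the pair inadmissible, and after identifying $u,x$ into $w$ this becomes the question of whether $G'$ has a min $(v,w)$-cut whose $v$-side is strictly larger than $\{v\}$. The paper instead performs the \emph{complete} splitting at $v$ (both $(uv,vx)$ and $(vy,vz)$), contracts the two new edges $ux$ and $yz$ to vertices $s$ and $t$, and simply tests whether the resulting graph $G^*$ has four edge-disjoint $s$--$t$ paths; this implicitly relies on Lemma~\ref{lem:twooptions}(ii), so that admissibility of $(uv,vx)$ is equivalent to $4$-edge connectivity of the fully split graph $\hat G$, and then a short case analysis on $|S\cap\{u,x,y,z\}|$ reduces $4$-edge connectivity of $\hat G$ to $(s,t)$-connectivity in $G^*$. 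Your route avoids the second split and the case analysis entirely, at the price of needing the \emph{maximal} $v$-side min cut rather than just the min-cut value; both arguments give $O(n)$ time via the same bounded-augmentation reasoning on a graph with $O(n)$ edges.

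One small correction: for a max $v$-to-$w$ flow, the maximal $v$-side min cut is $V'\setminus\{a : a\text{ can reach }w\text{ in the residual}\}$, not $V'\setminus\{a : w\text{ can reach }a\}$. So you should BFS \emph{towards} $w$ (equivalently, BFS from $w$ in the reversed residual graph). With that fix your test is correct; as stated, a BFS forward from $w$ can reach $v$ along reversed flow arcs even when a nontrivial min cut exists.
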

\begin{proof}
   We may suppose that the elements of the set $T$ of neighbors of $v$ are all distinct. 
   Otherwise, by Lemma \ref{lem:twooptions}, we know exactly which pairs are admissible, see Figure~\ref{fig:fourcases}.
   Consider the graph $\hG=(G_{u,x})_{y,z}$ obtained by splitting off
   the pairs $(uv,vx)$ and $(yv,vz)$ at $v$.
   Let $G^*$ be the graph obtained from $\hG$
   by contracting $ux$ to a single vertex $s$ and contracting $yz$ to a
   single vertex $t$.
   Then we apply a max $s-t$ flow computation to check whether $G^*$
   has $\ge4$ edge-disjoint $s-t$ paths; otherwise, $G^*$ has an
   $(s,t)$-cut $\delta(S)$ of size $\le3$. In the latter case, it is
   clear that our trial splitting is not admissible.

   In the former case, we claim that our trial splitting is
   admissible.  Suppose that $\hG$ is not $4$-edge connected.  Then there
   exists a non-empty, proper vertex set $S$ in $\hG$ such that $|T\cap
   S| \leq |T\setminus S|$ and $|\delta_{\hG}(S)|<4$.  
   Clearly, $|S \cap T|\leq2$, 
   and if $|S \cap T|=2$, then we have $|S\cap\{u,x\}|=1$
   and $|S\cap\{y,z\}|=1$ (otherwise, $S$ would give an $(s,t)$-cut
   of $G^*$ of size $\leq3$).  Since the size of the cut of $S$ is
   the same in $G$ and in $\hG$, we have, by $4$-edge connectivity of $G$, 
   $4>|\delta_{\hG}(S)|=|\delta_{G}(S)|\ge 4$, a contradiction. 

   To see that the running time is linear, observe that $G^*$ has
   $\leq 2n$ edges, an $s-t$ flow of value $\geq4$ can be computed
   by finding $4$ augmenting paths, and each augmenting path can be
   found in linear time.
\end{proof}

\begin{proofof}{Theorem~\ref{thm:twothirds}}
First, consider the base case in the recursion when $n=2$.
The only such $4$-regular $4$-edge connected multigraph is given by four parallel edges between $u$ and $v$, of which $e$ is one.
Picking the two cheapest edges from the remaining three edges gives
the desired subgraph.

For the induction step, suppose that $n \geq 3$ and the induction hypothesis holds for all $4$-regular $4$-edge connected multigraphs with at
most $n-1$ vertices and for all choices of edge $e$.
Consider a $4$-regular $4$-edge connected multigraph $G$ on $n$ vertices and an edge $e=uv$ in $G$.

Our algorithm proceeds as follows.
By Lemmas~\ref{lem:twooptions}~and~\ref{checkadmissible}, we can find in $O(n)$-time two neighbors of $v$, say $x$ and $y$, such that  $vx, vy \in A_e$ and $c_{vx}\ge c_{vy}.$
Next, we construct the graph $\hG := (G_{u,x})_{y,z} = G - v + \{ ux,yz \}$ and extend the cost function $c$ to the new edge $yz$ as $c_{yz} := c_{vz}-c_{vx}$ (note that the cost of $ux$ is inconsequential and that $c_{yz}$ may be negative or non-negative).
We recursively find a $2$-edge connected spanning subgraph $\hH$ of $\hG$
with cost at most $\frac{2}{3}c(\hG-ux)$.
Then, we lift $\hH$ to obtain a spanning subgraph $H$ of $G$:
\begin{equation*}
\begin{aligned}
H & := \begin{cases} \hH - yz + vy + vz & \text{ if } yz \in E(\hH), \\ \hH + vy + vx &  \text{ if } yz \notin E(\hH).
\end{cases}
\end{aligned}
\end{equation*}

We analyze the cost of this subgraph.
Regardless of the cases above, our choice of $c_{yz}$ implies that $c(H)=c(\hH)+c_{vy}+c_{vx}$.
Therefore,
\begin{multline*}
c(H) \leq \frac{2}{3} c(\hG-ux)+c_{vy}+c_{vx} = \frac{2}{3}\left\{ c(G-e) - c_{vx} - c_{vy} - c_{vz} + (c_{vz} - c_{vx})\right\} + c_{vy} + c_{vx} \\
= \frac{2}{3} c(G-e) + \frac{1}{3}(c_{vy}-c_{vx}) \leq \frac{2}{3} c(G-e) \ ,
\end{multline*}
where the last inequality follows from our choice of $vx,vy$ to satisfy $c_{vx} \geq c_{vy}$.

It remains to argue that $H$ is a $2$-edge connected spanning subgraph of $G-e$ that uses no more than one copy of any multiedge in $G$.
It is clear that the following hold: (a)~$e \notin E(H)$; (b)~$H$ is a spanning subgraph of $G$; and (c)~each multiedge of $G$ appears at most once in $H$.
Since $\hH$ is $2$-edge connected and adding and/or subdividing an edge preserves $2$-edge connectivity, $H$ is $2$-edge connected.
Overall, in $O(n^2)$-time we have constructed a $2$-edge connected spanning subgraph $H$ of $G-e$ whose cost is at most $\frac23 c(G-e)$, thereby proving Theorem~\ref{thm:twothirds}.
\end{proofof}

Using Theorem~\ref{thm:twothirds}, we give a deterministic $\frac43$-approximation algorithm for $\twoecm$ on half-integral instances.

\begin{proofof}{Theorem~\ref{thm:fourthirdsalgo}}
Let $x$ be an optimal half-integral solution to $(\tsplp)$ (and $(\twoecmlp)$) for an instance given by an $n$-vertex graph $\bG = (\bV,\bE)$ and a metric cost function $c$.
Let $G = (V,E)$ denote the graph induced by $2x$ where for each $e \in E$ we include $2 x_e$ copies of the edge $e$ in $G$.
Since $x$ has (fractional) degree $2$ at each vertex and it is fractionally $2$-edge connected, $G$ is a $4$-regular $4$-edge connected multigraph.
With a slight abuse of notation, we use the same cost function for the edges of $E$: for any $e \in E$, $c_e := c_f$, where $f$ denotes the edge in $\bE$ that gave rise to $e$.
We invoke Theorem~\ref{thm:twothirds} on $G$ and some edge $e \in E$.
This gives us a $2$-edge connected spanning subgraph $H$ of $G-e$ satisfying 
${c(H) \leq \frac23 c(G-e)}$.
Lifting the subgraph $H$ to $\bG$ gives a 2-edge connected spanning multisubgraph $\bH$ (of $\bG$); note that $\bH$ uses at most two copies of any edge in $\bG$.
By the first conclusion of Theorem~\ref{thm:twothirds} and the non-negativity of $c$, $c(\bH) = c(H) \leq \frac23
c(G-e)\leq \frac23 c(G)=\frac43 c^T x$, where the last equality follows by recalling that $G$ is induced by $2x$.
Besides invoking Theorem~\ref{thm:twothirds} we only perform trivial graph operations so the running time is $O(n^2)$. 
\end{proofof}

\section{\boldmath $\twoecm$ for $3$-Regular $3$-Edge Connected Graphs} \label{sec:7over8}

Let $G=(V,E)$ be a $3$-regular $3$-edge connected graph with non-negative edge costs $c \in \Rp^E$. 
In this section we consider an analogous problem to that of Theorem~\ref{thm:twothirds}, namely the problem of finding a polynomial-time algorithm which gives a $2$-edge connected spanning multisubgraph of $G$ of cost at most $\beta c(G)$ for some $\beta \geq 0$. 
Note that the everywhere $\frac23$ vector for $G$ is feasible for $(\tsplp)$. 
For any costs $c$ for which the everywhere $\frac23$ vector is also optimal for $(\tsplp)$ (such as for the graphic metric), such an algorithm would provide a $\frac{3\beta}{2}$-approximation for $\twoecm$. 
The conjecture that $\twoecmig = \frac43$ would then imply $\beta = \frac43 \cdot \frac23 = \frac89$ should be possible, and the $\frac65$ conjecture for $\twoecmig$ would imply $\beta = \frac65 \cdot \frac23 = \frac45$ should be possible. 
In \cite{BoydL17} a constructive algorithm for $\beta = \frac45$ is given, however it does not run in polynomial time.

In \cite[Theorem~2]{HaddadanNR19}, Haddadan, Newman, and Ravi show that it is possible to do better than $\frac89$ for this problem, and provide an efficient algorithm for $\beta = \frac{15}{17}$. 
In fact, they show that the everywhere $\frac{15}{17}$ vector can be expressed as a convex combination of $2$-edge connected spanning multisubgraphs of $G$ and this convex combination can be found in polynomial time.
They remark that combining their ideas with an efficient algorithm for Theorem~\ref{uniformcover} would imply the result for $\beta = \frac{7}{8} (< \frac{15}{17})$. 
Although a polynomial-time algorithm for Theorem~\ref{uniformcover} is not currently known, it is possible to use our result in Theorem~\ref{thm:twothirds} to obtain $\beta = \frac{7}{8}$, as follows.
	
\begin{theorem} \label{thm:7over8}	
Let $G=(V,E)$ be a $3$-regular $3$-edge connected graph on $n$ vertices with non-negative edge costs $c \in \Rp^E$. 
Then in $O(n^3)$-time we can find a $2$-edge connected spanning multisubgraph $H$ of $G$ such that $c(H) \leq \frac78 c(G)$.
\end{theorem}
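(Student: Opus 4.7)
The plan is to reduce the 3-regular case to the 4-regular 4-edge-connected setting of Theorem~\ref{thm:twothirds} by doubling a cheap perfect matching of $G$.

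First, I would verify that the all-$\frac{1}{3}$ vector $\frac{1}{3}\chi^E$ lies in the perfect matching polytope of $G$. The degree constraints $x(\delta(v))=1$ hold by 3-regularity, and the odd-set cut inequalities $x(\delta(S)) \geq 1$ for odd $|S|$ follow from $|\delta_G(S)| \geq 3$; equivalently, using $|\delta_G(S)| \equiv |S| \pmod 2$ and $3|S| = 2|E(S)| + |\delta_G(S)|$, the inequality $\frac{1}{3}|E(S)| \leq \lfloor |S|/2 \rfloor$ is immediate. Consequently, a minimum-cost perfect matching $M$ of $G$ satisfies $c(M) \leq \frac{1}{3}c(G)$, and Edmonds' weighted matching algorithm computes $M$ in $O(n^3)$ time.

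Next, I would form the multigraph $G' = G + M$ by duplicating each edge of $M$, and argue that $G'$ is 4-regular and 4-edge-connected. The regularity is immediate; for 4-edge-connectivity, the same parity identity shows any 3-cut of $G$ has odd side size, so $M$ crosses it at least once and $|\delta_{G'}(S)| \geq 4$, while cuts of size $\geq 4$ in $G$ remain so in $G'$. Extending $c$ to $G'$ by copying each $M$-edge's cost to its duplicate, I would then invoke Theorem~\ref{thm:twothirds} on $(G',c)$ with a carefully chosen edge $e$ to obtain, in $O(n^2)$ time, a 2-edge-connected spanning subgraph $H$ of $G' - e$ with
\begin{equation*}
c(H) \;\leq\; \tfrac{2}{3}\, c(G' - e) \;=\; \tfrac{2}{3}\bigl(c(G) + c(M) - c_e\bigr).
\end{equation*}
By property~(ii) of Theorem~\ref{thm:twothirds}, each specific parallel copy of an $M$-edge appears at most once in $H$; hence, viewed as a multisubgraph of $G$, $H$ uses each matching edge at most twice and each non-matching edge at most once, and it is a 2-edge-connected spanning multisubgraph of $G$ as required.

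The main obstacle will be the cost analysis. Substituting $c(M) \leq \frac{1}{3}c(G)$ into the bound above only gives the baseline estimate $c(H) \leq \frac{8}{9}c(G) - \frac{2}{3}c_e$, which falls short of the target $\frac{7}{8}c(G)$ by $\frac{1}{72}c(G)$. Closing this gap amounts to ensuring that the edge $e$ fed into Theorem~\ref{thm:twothirds} can always be chosen with $c_e \geq c(G)/48$. Following the HNR-inspired structural analysis, I would argue this by combining (i) a pigeonhole bound that identifies a sufficiently expensive single edge whenever one exists, with (ii) a refined use of the matching-polytope slack together with the freedom in selecting $e$ among the doubled matching edges of $G'$ when no single heavy edge is present. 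The overall running time is dominated by the $O(n^3)$ perfect matching computation, matching the claimed bound.
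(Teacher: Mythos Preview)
Your reduction to Theorem~\ref{thm:twothirds} via doubling a minimum-cost perfect matching is exactly the construction the paper uses for its subgraph $H_2$, and it correctly yields $c(H)\le\frac{2}{3}(c(G)+c(M))\le\frac{8}{9}c(G)$. The genuine gap is the final step: the claim that one can always choose $e$ with $c_e\ge c(G)/48$ is false, and no amount of ``matching-polytope slack'' can rescue it. Take $G$ with all edge costs equal to~$1$. Then $c(G)=\tfrac{3n}{2}$, every perfect matching has cost exactly $\tfrac{n}{2}=\tfrac{1}{3}c(G)$ (so there is no slack), and every edge of $G'$ has cost~$1$, while $c(G)/48=n/32$. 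For $n>32$ no edge satisfies $c_e\ge c(G)/48$, and your bound stays at $\tfrac{8}{9}c(G)-\tfrac{2}{3}$, which exceeds $\tfrac{7}{8}c(G)$ for all large~$n$. The term $-\tfrac{2}{3}c_e$ is an $O(1/n)$ correction and can never close a constant-factor gap.

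The paper gets past $\tfrac{8}{9}$ by building a \emph{second}, structurally different solution $H_1$ and taking the cheaper of the two. Concretely, it first finds a $2$-factor $F$ meeting every $3$- and $4$-cut (so $M:=E\setminus F$ is a perfect matching and $G/F$ is $5$-edge connected), and then uses the Wolsey--Shmoys--Williamson $\tfrac{3}{2}$-analysis on the everywhere-$\tfrac{2}{5}$ vector of $G/F$ to obtain $H_1$ with $c(H_1)\le c(F)+\tfrac{3}{5}c(M)$. Your construction is their $H_2$, with $c(H_2)\le\tfrac{2}{3}c(F)+\tfrac{4}{3}c(M)$. Neither bound alone is $\le\tfrac{7}{8}c(G)$ in the worst case, but $\tfrac{5}{8}c(H_1)+\tfrac{3}{8}c(H_2)\le\tfrac{7}{8}c(G)$, so $\min\{c(H_1),c(H_2)\}\le\tfrac{7}{8}c(G)$. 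The missing idea in your proposal is precisely this second solution~$H_1$; the single-shot application of Theorem~\ref{thm:twothirds} cannot beat~$\tfrac{8}{9}$.
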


\begin{proof}
Let $F$ be a $2$-factor of $G$ that intersects all of the $3$-edge cuts and $4$-edge cuts of $G$. 
Such a $2$-factor can be found in $O(n^3)$-time (see \cite[Theorem~5.4]{BoydIT13}).
Let $G'$ be the graph obtained by contracting the cycles of $F$ and removing any resulting loops, and let $M : = E(G')$. 
Clearly $G'$ is $5$-edge connected (by choice of $F$), and thus the vector $y \in \Rp^M$ defined by $y_e := \frac25$ for all $e \in M$ is feasible for $(\twoecmlp)$ for $G'$. 
It then follows from the polyhedral analysis of Wolsey \cite{Wolsey80} and Shmoys and Williamson \cite{ShmoysW90} of the $(\tsplp)$ that we can find a $2$-edge connected spanning multisubgraph of $G'$ with edge set $R$ satisfying $c(R) \leq \frac32 c^T y = \frac35 c(M)$.
Then the graph $H_1$ induced by $F \cup R$ is a $2$-edge connected spanning multisubgraph of $G$ such that 

\begin{equation} \label{eq:H1}
c(H_1) \leq c(F) + \frac35 c(M) \leq c(F) + \frac35 c(E \setminus F) \ .
\end{equation}

Now consider the vector $z \in \Rp^E$ where $z_e := 1/2$ for all $e \in F$, and $z_e := 1$ otherwise. 
Vector $z$ is a feasible half-integer solution for $(\tsplp)$, and thus by Theorem~\ref{thm:twothirds} and the ideas used in the proof of Theorem~\ref{thm:fourthirdsalgo}, in $O(n^2)$-time we can find a $2$-edge connected spanning multisubgraph $H_2$ of $G$ such that 

\begin{equation} \label{eq:H2}
c(H_2) \leq \frac23 c(F) + \frac43 c(E \setminus F).
\end{equation}

We complete the proof by showing that either $H_1$ or $H_2$ has cost at most $\frac78 c(G)$. 
Using (\ref{eq:H1}) and (\ref{eq:H2}) we have:

\begin{equation*}
\min(c(H_1),c(H_2)) \leq \frac58 c(H_1) + \frac38 c(H_2) \leq \frac58 \bigl( c(F)+\frac35 c(E \setminus F) \bigr) + \frac38 \bigl( \frac23 c(F) + \frac43 c(E \setminus F) \bigr) = \frac78 c(G).
\end{equation*}
\end{proof}

\section*{Acknowledgments}
We thank Chaitanya Swamy for pointing us to \cite{CarrV02}.

\bibliography{four_thirds_2ecm_ref}

\end{document}